\newtheorem{prop}{Proposition}
\newtheorem{theorem}[prop]{Theorem}
\theoremstyle{definition}
\newtheorem{example}[prop]{Example}
\newcommand{\R}{\mathbb{R}}
\newcommand{\ess}[1]{\hat{#1}}
\newcommand{\ex}[2]{\mathbb{E}_{#1}\left[#2\right]}
\begin{document}
\title{Information Geometry and Evolutionary Game Theory}
\author{Marc Harper}
\address{University of California Los Angeles}
\email{marcharper@ucla.edu} 
\date{\today}
\subjclass[2000]{Primary: 37N25; Secondary: 91A22, 94A15}
\keywords{evolutionary game theory, information geometry, information divergence}

\begin{abstract}
The Shahshahani geometry of evolutionary game theory is realized as the information geometry of the simplex, deriving from the Fisher information metric of the manifold of categorical probability distributions. Some essential concepts in evolutionary game theory are realized information-theoretically. Results are extended to the Lotka-Volterra equation and to multiple population systems.
\end{abstract}

\maketitle

\section{Introduction}

The replicator equation is a widely-used model of natural selection. This paper explains the realization of the geometry of evolutionary game theory in terms of information theoretical principles, giving a purely mathematical and statistical origination of the replicator equation. Under this interpretation, the replicator equation models the information dynamics of a population of replicating entities. Additionally, the Kullback-Liebler information divergence, which serves as a Lyapunov function for the replicator dynamic, can be interpreted as a measure potential information, characterizing the concept of evolutionary stable state informatically.

\subsection{Continuous Replicator Dynamic}

Consider a categorical distribution $X$ on $n$ categories of entities in a population. This is discrete probability distribution represented by a unit vector of $n$ variables $x = (x_1, \ldots, x_n)$ under the normalization $|x| = x_1 + \cdots + x_n = 1$, where $x_i$ denotes the proportion of the $i$-th type in the population. The replicator equation on this distribution is the differential equation
\[ \dot{x}_i = x_i\left(f_i(x) - \bar{f}(x)\right),\]
where $f(x) = (f_1(x), \ldots, f_n(x))$ is a fitness landscape and $\bar{f}(x)= x_1 f_1(x) + \cdots + x_n f_n(x)$ is the mean fitness.

\section{Geometric Aspects of Evolutionary Game Theory}

The information theoretic interpretations of the previous chapter have a unified basis in information geometry. We begin with a description of the geometry of the simplex and geometric results known in evolutionary game theory.

\subsection{The Geometry of the Simplex}

Let $S^n$ be the interior of the $n$-simplex $\Delta^n$, which is $(n-1)$-dimensional. Each point $x$ of the simplex has the property that $x_1 + \cdots + x_n = 1$, so the tangent space at any point on the interior is the $(n-1)$-dimensional vector space described by $n$ vectors $v_1, \ldots, v_n$ such that $v_1 + \cdots + v_n = 0$. The orthogonal complement of the tangent space is the one dimensional line with direction vector $\mathbf{1} = (1, 1, \ldots, 1)$. Indeed $\mathbf{1} \cdot v = 0$ for any $v$ in the tangent space and the complement is necessarily one-dimensional. The faces of $\Delta^n$ are isomorphic to a simplex of one lower dimension, which can be seen by setting one of the $x_i$ to zero and indicates the absence of that type in the population. The replicator equation is \emph{forward-invariant} on the simplex (and hence each of its faces), since if $x_i = 0$ then $\dot{x}_i = 0$. Because of this property, the replicator equation is called \emph{non-innovative} since new types cannot arise, in contrast to evolutionary dynamics in which this is possible (notably the replicator-mutator equation \cite{Nowak06} and the orthogonal projection dynamic \cite{Sandholm09}).

\subsection{Shahshahani Geometry}

Shahshahani introduced two Riemannian manifolds into mathematical biology\cite{Shahshahani79}: the positive orthant of $\R^n$, denoted $\R^n_+$, with the metric \[ g_{ij}(x) = \frac{|x|}{x_i}\delta_{ij}, \] where $|x| = \sum_{i}{x_i}$ and the restriction to the simplex $\Delta^n = \{ x \in \R^n \, | \, |x| = 1, x_i \geq 0 \, \forall i \} \subset \R^n_+$, with the metric \[ g_{ij}(x) = \frac{1}{x_i}\delta_{ij}. \]
Call the latter manifold the Shahshahani manifold; its metric is known as the Shahshahani metric. There is a normalization map $N: \R^n_{+} \to \Delta^n$ given by $x \mapsto \frac{x}{|x|}$. For each $\tau \in \R_{+}$, there is a map $\varphi_{\tau}$ mapping the simplex into $\R^n_{+}$ by $x \mapsto \tau x$. These maps are sections of the normalization map since $N \circ \varphi_{\tau} = \text{id}_{\Delta^n}$. The Shahshahani metric diverges on the boundary of the simplex so the metric is valid only on the interior $S^n$. Dynamics that are forward-invariant, such as the replicator dynamic, are not affected by the discontinuity at the boundary.


\subsection{The Replicator Dynamic, Geometrically}

The geometry of the Shahshahani manifold yields an elegant interpretation of the replicator equation: it is the gradient flow of the Shahshahani metric. Shahshahani proved the result for a special case of the replicator equation; the following more general theorem comes from \cite{Hofbauer98}.
\begin{theorem}
If the differential equation $\dot{x_i} = f_i(x)$ is a Euclidean gradient with $f_i = \frac{\partial V}{\partial x_i}$ then the replicator equation $x_i = \ess{f}_i(x) = x_i(f_i(x) - \bar{f}(x))$ is a gradient with respect to the Shahshahani metric.
\end{theorem}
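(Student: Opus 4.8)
The plan is to work directly from the defining property of the Riemannian gradient rather than through Christoffel symbols or an explicit computation of the inverse metric. Recall that the gradient $\nabla V$ of $V$ with respect to a metric $g$ is the unique vector field, \emph{tangent to the manifold}, satisfying $g(\nabla V, w) = dV(w)$ for every tangent vector $w$. On the interior $S^n$ the two constraints that do the real work are that a tangent vector $w$ obeys $\sum_i w_i = 0$ and that the gradient must itself be tangent, i.e. $\sum_i (\nabla V)_i = 0$. I would show that the replicator field $\ess{f}_i = x_i(f_i - \bar{f})$ is exactly the vector field singled out by these conditions.

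First I would write down the defining equation in the Shahshahani metric. Denoting the candidate gradient by $G$, the requirement $g(G,w) = dV(w)$ for all tangent $w$ reads $\sum_i \frac{1}{x_i} G_i w_i = \sum_i f_i w_i$, using $f_i = \partial V / \partial x_i$. Since this holds for every $w$ with $\sum_i w_i = 0$, the vector with components $\frac{G_i}{x_i} - f_i$ must be orthogonal (in the standard Euclidean pairing) to the entire tangent space, hence parallel to $\mathbf{1}$; that is, $\frac{G_i}{x_i} - f_i = \lambda$ for a single scalar $\lambda$ independent of $i$. Solving gives $G_i = x_i(f_i + \lambda)$.

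It then remains only to fix $\lambda$ from the tangency constraint. Imposing $\sum_i G_i = 0$ and using $\sum_i x_i = 1$ together with $\bar{f} = \sum_i x_i f_i$ yields $\bar{f} + \lambda = 0$, so $\lambda = -\bar{f}$ and $G_i = x_i(f_i - \bar{f}) = \ess{f}_i$, which is precisely the replicator field.

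The step I expect to be the main obstacle — or at least the point most easily botched — is the interplay between the ambient coordinates and the constraint. The Shahshahani metric is recorded in the $n$ ambient coordinates, and the naive inverse-metric formula $g^{ij}\partial_j V = x_i f_i$ does \emph{not} land in the tangent space of the simplex. The Lagrange multiplier $\lambda$ is exactly the correction that projects this raw gradient back onto the tangent space, and it is what manufactures the mean-fitness subtraction $-\bar{f}$. Getting this projection right, rather than quietly dropping the constraint, is the crux of the argument; everything else is bookkeeping.
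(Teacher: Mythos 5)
Your argument is correct and complete: characterizing the Shahshahani gradient by the duality condition on tangent vectors, deducing that $\tfrac{G_i}{x_i}-f_i$ is constant because the Euclidean orthogonal complement of $\{w:\sum_i w_i=0\}$ is spanned by $\mathbf{1}$, and then fixing the multiplier $\lambda=-\bar{f}$ from the tangency constraint is exactly the standard projection argument. The paper itself states this theorem without proof (deferring to Hofbauer--Sigmund), and your proof is essentially the one given there, so there is nothing to flag.
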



In the case that the fitness landscape is a Euclidean gradient, the Shahshahani gradient gives a Lyapunov function for the dynamic. The classical case is that of a symmetric matrix $A$ and fitness landscape $f(x) = Ax$, where $A$ is the matrix of Malthusian fitness parameters given by the difference in birth rates and death rates $a_{ij} = b_{ij} - d_{ij}$ of an individual having alleles $i$ and $j$, where the alleles are of a single gene locus. In this case the Shahshahani potential is the mean fitness $\frac{1}{2} x \cdot f(x) = \frac{1}{2} x \cdot Ax$, with $Ax$ the Euclidean gradient\cite{Shahshahani79, Hofbauer98}.

\subsection{Fisher's Fundamental Theorem and Kimura's Maximal Principle}

Fisher's fundamental theorem is a consequence of the geometric approach.
\begin{theorem}\label{fft_egt}
The rate of change of the Shahshahani potential is equal to the variance of the fitness landscape \cite{Hofbauer98}.
\[ \dot{V}(x) = \text{Var}_{x}[f(x)].\]
\end{theorem}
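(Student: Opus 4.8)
The plan is to differentiate the potential $V$ along the replicator flow and recognize the result as a variance, using nothing more than the chain rule together with the two defining relations already at hand: the gradient hypothesis $f_i = \partial V/\partial x_i$ supplied by the preceding theorem, and the replicator equation $\dot{x}_i = x_i\bigl(f_i(x) - \bar{f}(x)\bigr)$.

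First I would write, applying the chain rule to $t \mapsto V(x(t))$ and then substituting both relations,
\[ \dot{V}(x) = \sum_{i=1}^n \frac{\partial V}{\partial x_i}\,\dot{x}_i = \sum_{i=1}^n f_i(x)\,x_i\bigl(f_i(x) - \bar{f}(x)\bigr). \]
Expanding the product splits this into two sums. The first, $\sum_i x_i f_i(x)^2$, is exactly the second moment $\ex{x}{f(x)^2}$ of the fitness regarded as a random variable taking the value $f_i(x)$ with probability $x_i$. For the second sum I would factor out $\bar{f}(x)$, which does not depend on the summation index, and use $\sum_i x_i f_i(x) = \bar{f}(x)$ to obtain $\bar{f}(x)^2 = \ex{x}{f(x)}^2$. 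Subtracting yields
\[ \dot{V}(x) = \ex{x}{f(x)^2} - \ex{x}{f(x)}^2 = \text{Var}_x[f(x)], \]
which is the assertion.

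There is no deep obstacle here; the computation is elementary once the gradient identification is granted. The only point demanding care is the interpretation of the fitness as a random variable on the $n$ types distributed according to $x$, so that the elementary identity $\text{Var} = \mathbb{E}[f^2] - \mathbb{E}[f]^2$ applies verbatim. A subtlety worth flagging is that on the simplex the relation $f_i = \partial V/\partial x_i$ pins down $f$ only up to addition of a multiple of the constraint normal $\mathbf{1}$; however, since the replicator flow is tangent to the simplex (indeed $\sum_i \dot{x}_i = \sum_i x_i(f_i - \bar{f}) = 0$), any shift $f_i \mapsto f_i + c$ leaves both $\sum_i f_i \dot{x}_i$ and $\text{Var}_x[f(x)]$ unchanged. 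Thus the identity is well posed, and this translation invariance doubles as a consistency check on the calculation.
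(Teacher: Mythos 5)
Your computation is correct and is exactly the standard argument: the paper itself states this theorem without proof, citing \cite{Hofbauer98}, and the chain-rule calculation $\dot{V} = \sum_i f_i\,x_i(f_i - \bar{f}) = \ex{x}{f^2} - \ex{x}{f}^2$ is precisely the proof given there. Your closing remark on invariance under $f \mapsto f + c\mathbf{1}$ is a sensible extra consistency check, though not required.
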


This is a general version of Fisher's Fundamental Theorem of Natural Selection, specializing to the traditional result in the case of a symmetric and linear fitness landscape\cite{Hofbauer98}. Kimura's maximal principle follows from the fact that the replicator equation is a gradient flow\cite{Shahshahani79}. As these are both important results in mathematical biology emerging from the geometry, an interpretation is desired of the Shahshahani metric that provides intuition for the introduced geometry on the simplex in the context of modeling natural selection.

\section{The Information Geometry of Natural Selection}

An intuitive interpretation of the Shahshahani geometry is provided by information theory. Information geometry\cite{Amari93} studies manifolds of probability distributions $p(s, x)$ on a set $S$ depending on parameters $x$, which are the coordinates of the manifold. The manifold is endowed with the Fisher information metric,
\[ g_{ij}(x) = \mathbb{E}\left[ \frac{\partial \log p}{\partial x^i} \frac{\partial \log p}{\partial x^j}  \right] \]
which can be shown to be the unique (up to a constant) metric respecting sufficient statistics \cite{Chentsov}.

\subsection{The Fisher Information Metric is the Shahshahani Metric}

The manifold of immediate interest is $P(X)$, the set of categorical probability distributions on a finite set $X$, with the Fisher information metric. In this case, it is convenient to abuse notation by allowing the parameters $x$ and distribution variables $s$ to have the same symbolic representation. \footnote{Different coordinates are sometimes chosen in information geometry for $P(X)$, letting (for instance) $x_{n+1} = 1 - \sum_{i=1}^{n}{x_i}$ to enforce $\sum_{i}{x_i} = 1$. This yields an asymmetric set of replicator equations, so a different set of coordinates is chosen in this exposition.} There is a natural mapping $\varphi: P(X) \to \Delta^n$, where $|X| = n$, given by $p \to (p(1), p(2), \ldots, p(n)) = (x_1, \ldots, x_n)$. This maps $P(X)$ isometrically onto the simplex, and an easy computation shows that that the Fisher information metric is induced by the Shahshahani metric under this mapping. Simply observe that
\[g_{ij}(x) = \mathbb{E}\left[ \frac{\partial \log x}{\partial x^i} \frac{\partial \log x}{\partial x^j}  \right]
= \sum_{k}{x_k \frac{1}{x_i}\delta_{ik} \frac{1}{x_j}\delta_{jk}} = \frac{1}{x_i} \delta_{ij} \]

This result was recognized in \cite{Amari95} and \cite{Nihat05}.

\subsection{Fisher's Fundamental Theorem}
Fisher's fundamental theorem is built into the geometry of $P(X)$\cite{Amari93}. Define the maps $E[g]$ on $P(X)$ by $p \mapsto E_{p}[g]$, where $g$ is from the set of functions ${\R}^X = \{ g : X \to \R \}$ and $E_{p}$ is the mean taken at the distribution $p$. Similarly, let $V_p[g]$ denote the variance of the function $g$ at $p$.
\begin{theorem}
For any $g: X \to \R \in {\R}^X$, \[ V_{p} = ||(\text{d}E[g])_p||_{p}^{2} = || (\nabla E[g])_p||_{p}^{2} ,\]
where the norm is induced by the Fisher information metric, for all $p \in P(X)$.
\end{theorem}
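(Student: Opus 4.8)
The plan is to compute both the gradient $\nabla E[g]$ and the differential $\mathrm{d}E[g]$ of the expectation map explicitly at a point $p \in P(X)$, identifying $P(X)$ with the simplex via $\varphi$ so that the relevant metric is the Shahshahani/Fisher metric $g_{ij}(x) = \frac{1}{x_i}\delta_{ij}$. Writing $g_k = g(k)$, the expectation map is the linear function $E[g](x) = \sum_k x_k g_k$, so its partial derivatives are $\frac{\partial E[g]}{\partial x_i} = g_i$ and the differential is $\mathrm{d}E[g] = \sum_i g_i \, \mathrm{d}x_i$.

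The essential subtlety, and what I expect to be the main obstacle, is that the coordinates $x_i$ are not independent on the simplex: the tangent space at $p$ consists only of vectors $v$ with $\sum_i v_i = 0$. Treating the $x_i$ as free coordinates and applying the inverse metric $g^{ij} = x_i \delta_{ij}$ naively would yield $\sum_i x_i g_i^2 = E_p[g^2]$, the second moment rather than the variance. One must instead compute the gradient \emph{within} the constrained tangent space. So I would define $\nabla E[g]$ as the unique tangent vector $w$ (with $\sum_i w_i = 0$) satisfying $\langle w, v\rangle_p = (\mathrm{d}E[g])_p(v)$ for every tangent $v$, where $\langle u, v\rangle_p = \sum_i \frac{1}{x_i} u_i v_i$.

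Carrying this out with a Lagrange multiplier $\lambda$ for the constraint $\sum_i v_i = 0$ forces $\frac{w_i}{x_i} = g_i - \lambda$, i.e. $w_i = x_i(g_i - \lambda)$; imposing $\sum_i w_i = 0$ then pins down $\lambda = \sum_i x_i g_i = E_p[g]$. Hence $(\nabla E[g])_i = x_i\big(g_i - E_p[g]\big)$, which is precisely the replicator form of the field. The squared norm is then a direct computation:
\[ \|(\nabla E[g])_p\|_p^2 = \sum_i \frac{1}{x_i}\, x_i^2\big(g_i - E_p[g]\big)^2 = \sum_i x_i\big(g_i - E_p[g]\big)^2 = V_p[g]. \]

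Finally, the equality $\|(\mathrm{d}E[g])_p\|_p^2 = \|(\nabla E[g])_p\|_p^2$ is the statement that the musical isomorphism induced by the Fisher metric is an isometry between the tangent and cotangent spaces; equivalently it follows from $\|(\mathrm{d}E[g])_p\|_p^2 = (\mathrm{d}E[g])_p(\nabla E[g]) = \sum_i g_i\, x_i(g_i - E_p[g]) = E_p[g^2] - E_p[g]^2 = V_p[g]$, which also re-derives the variance and confirms that all three quantities agree. The only delicate point throughout is the correct treatment of the simplex constraint when passing between the differential and the gradient; everything else is bookkeeping.
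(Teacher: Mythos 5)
Your proof is correct, and it supplies something the paper itself does not: the theorem is stated without proof (the paper defers to the Amari reference), so there is no in-paper argument to compare against. Your computation is the standard one, and you correctly identify and handle the single genuine subtlety, namely that the gradient must be taken within the constrained tangent space $\{v : \sum_i v_i = 0\}$ rather than by naively applying the inverse metric $g^{ij} = x_i\delta_{ij}$ to the components $g_i$ (which would give the second moment $E_p[g^2]$ instead of the variance). The Lagrange-multiplier step pinning down $\lambda = E_p[g]$ and yielding $(\nabla E[g])_i = x_i(g_i - E_p[g])$ is exactly right, and it has the pleasant side effect of exhibiting the gradient in replicator form, which is consistent with the paper's larger theme that the replicator equation is the Shahshahani/Fisher gradient flow of the mean-fitness functional. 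Your closing identity $\|(\mathrm{d}E[g])_p\|_p^2 = (\mathrm{d}E[g])_p(\nabla E[g])$ correctly justifies the equality of the covector and vector norms via the musical isomorphism. The only cosmetic point: the statement's left-hand side $V_p$ should be read as $V_p[g]$, as you have done.
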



\subsection{Information Divergences and Metrics on $P(X)$}

Some Riemannian metrics on $P(X)$ can be derived from information divergences\cite{Amari93}. Information geometry defines an \emph{information divergence} as a smooth function $D(\cdot || \cdot): P(X) \times P(X)$ such that $D(x||y) \geq 0$ with equality iff $x = y$. The second order Taylor expansion in either variable evaluated along the diagonal $x=y$ begins with the Hessian term $H$. Indeed,
\begin{align*}
D(x||y) &= D(x || y)|_{x=y} + (\nabla D(x||y)|_{x=y})\cdot(x-y) + \frac{1}{2}\cdot(x-y)^{T}H(x)\cdot(x-y) + \cdots\\
&= 0 + 0 + \frac{1}{2}\cdot(x-y)^{T}H(x)\cdot(x-y) + \cdots\\
\end{align*}
because the gradient is parallel to $\mathbf{1}$ and $\mathbf{1} \cdot (x-y) = \mathbf{1} \cdot x - \mathbf{1} \cdot y = 1 -1 = 0$.


In the case that the Hessian is positive definite, it can be used to define a metric,
\[ g_{ij}^{(D)} = \left(\frac{\partial^2 D}{\partial x_i \partial y_j} \right)|_{x=y}. \]

A metric then defines a gradient flow, hence a global information divergence yields a dynamical system on the simplex. Importantly, the Hessian of the Kullback-Liebler divergence (in either variable, evaluated on the diagonal) is the Fisher information matrix, yielding the local to global connection of these two measures of information.

\begin{example}[Kullback-Liebler Divergence]
The Kullback-Liebler divergence localizes to the Fisher information metric. In coordinates we obtain the Shahshahani metric since
\[ \frac{\partial^2}{\partial x_i \partial y_j}{D_{KL}(x || y)}|_{x=y} = \frac{1}{x_i} \delta_{ij}.\]
Hence the induced gradient flow is the replicator equation. This allows the interpretation of Fisher's Fundamental theorem and Kimura's maximal principle in terms of Fisher information: natural selection forms a gradient with respect to an informatic measure, and hence locally has the direction of maximal information increase. The rate of change of the mean fitness of the population is given by the informatic variance.
\end{example}


\subsection{Kullback-Liebler Divergence is a Lyapunov function for the Replicator Dynamic}

The following theorem shows that the Kullback-Liebler information divergence forms a Lyapunov function for the replicator dynamic, given an evolutionarily stable state. In fact, evolutionary stability is characterized by this property. A version of this theorem was proved in \cite{Akin79} and in \cite{Akin90}.

\begin{theorem}\label{ess_Lyapunov}
The state $\ess{x}$ is an interior ESS for the replicator dynamic if and only if $D_{KL}(\ess{x} || x)$ is a local Lyapunov function.
\end{theorem}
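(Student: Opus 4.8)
The plan is to take $V(x) = D_{KL}(\ess{x}||x)$ itself as the candidate Lyapunov function and reduce the entire equivalence to a single time-derivative computation together with the standard local characterization of an interior ESS. Recall that a local Lyapunov function for the rest point $\ess{x}$ must satisfy two properties: $V(x) > 0$ for $x \neq \ess{x}$ with $V(\ess{x}) = 0$, and $\dot V \leq 0$ (strictly, for asymptotic stability) along trajectories in a neighborhood of $\ess{x}$. The positivity is exactly Gibbs' inequality for the relative entropy and holds unconditionally on the interior, so the content of the theorem lives entirely in the sign of $\dot V$.

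First I would differentiate $V$ along solutions of the replicator equation. Writing $V(x) = \sum_i \ess{x}_i \log \ess{x}_i - \sum_i \ess{x}_i \log x_i$ and noting that the first sum is constant, the replicator equation gives $\dot x_i / x_i = f_i(x) - \bar f(x)$, so that
\[ \dot V = -\sum_i \ess{x}_i\big(f_i(x) - \bar f(x)\big) = -\big(\ess{x}\cdot f(x) - x\cdot f(x)\big), \]
using $\sum_i \ess{x}_i = 1$ and $\bar f(x) = x \cdot f(x)$. This is the crux of the argument: it shows that $\dot V < 0$ for $x \neq \ess{x}$ near $\ess{x}$ holds if and only if the local superiority inequality $\ess{x}\cdot f(x) > x\cdot f(x)$ holds on a punctured neighborhood of $\ess{x}$.

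Second I would invoke the characterization of an interior ESS as precisely this local superiority condition. In the linear case $f(x) = Ax$ this is the equivalence between Maynard Smith's two-part ESS definition and the inequality $\ess{x}\cdot Ax > x\cdot Ax$ on a neighborhood; for a general fitness landscape the superiority inequality is the natural definition of an interior ESS. With this in hand both directions follow at once: if $\ess{x}$ is an interior ESS then the inequality holds, giving $\dot V < 0$ and hence a local Lyapunov function; conversely, if $D_{KL}(\ess{x}||x)$ is a local Lyapunov function then $\dot V < 0$ on a punctured neighborhood, which by the displayed identity is exactly the superiority inequality defining the ESS.

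I expect the main obstacle to be the bookkeeping around the ESS definition rather than any analytic difficulty. The derivative computation is short and forces the equivalence; the delicate point is matching the precise notion of interior ESS (the two-part Maynard Smith condition, or its local-superiority reformulation) to the neighborhood on which $\dot V < 0$, and confirming that $\ess{x}$ is a rest point of the dynamic in both directions---automatic for an ESS since it is a Nash equilibrium, and automatic in the converse since asymptotic stability forces $\ess{x}$ to be an equilibrium.
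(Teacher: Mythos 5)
Your proposal is correct and follows essentially the same route as the paper: differentiate $V(x)=D_{KL}(\ess{x}\,||\,x)$ along the flow to get $\dot V = -(\ess{x}\cdot f(x)-x\cdot f(x))$, identify the sign condition with the local-superiority characterization of an interior ESS, and use nonnegativity of the relative entropy (your Gibbs' inequality, the paper's Jensen's inequality) for positive definiteness. No substantive differences to report.
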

\begin{proof}
Let $V(x) = D_{KL}(\ess{x} || x) = \sum_{i}{\ess{x}_i \log{\ess{x}_i}} - \sum_{i}{\ess{x}_i \log{x_i}}$.
Then we have that
\begin{align*} \dot{V}(x) &= -\sum_{i}{ \ess{x}_i \frac{\dot{x}_i}{x_i}} = -\sum_{i}{ \ess{x}_i (f_i(x) - \bar{f}(x)) }\\
&= -\sum_{i}{ \ess{x}_i f_i(x)} + \sum_{i}{ \ess{x}_i\bar{f}(x)} = -\sum_{i}{ \ess{x}_i f_i(x)} + \left(\sum_{i}{ \ess{x}_i}\right)\bar{f}(x)\\
&= -\sum_{i}{ \ess{x}_i f_i(x)} + \bar{f}(x) = -(\ess{x} \cdot f(x) - x \cdot f(x)) < 0.
\end{align*}
The last inequality holds if and only if $\ess{x}$ is an ESS. Finally, by Jensen's inequality, $D_{KL}$ is minimized when $x = \ess{x}$, so it is a local Lyapunov function.
\end{proof}

A similar result is proven in \cite{Hofbauer98}, with the Lyapunov function $V(x) = \prod_{i}{x_{i}^{ \hat{x}_i}}$, but the informatic origin is not apparent in this form, although the quantity $V$ can be interpreted as the probability of finding a categorical distribution on $x$ in the state $\hat{x}$. The quantity $D_{KL}(\ess{x} || x)$ can be described as the \emph{potential information} of the replicator system. The above result can then be interpreted information theoretically -- natural selection acts to minimize the potential information.


Theorem \ref{ess_Lyapunov} holds for a class of ecological dynamics. A dynamic of the form $\dot{x}_i = x_i g_i(x), \, i=1,\ldots ,n$ (an \emph{ecological dynamic}) is called \emph{aggregate monotone} with respect to a fitness landscape $f$ if $g = (g_1, \ldots, g_n)$ has the property that $y \cdot f(x) > z \cdot f(x)$ if and only if $y \cdot g(x) > z \cdot g(x)$, for all distributions $x,y,z$. An aggregate monotone dynamic is the replicator dynamic up to a change in velocity \cite{Ritzberger95}. In particular, the replicator equation with a convex function applied to the fitness landscape is aggregate monotone. Consider the following extension of Theorem \ref{ess_Lyapunov}.

\begin{theorem}\label{ess_Lyapunov_aggregate}
For an aggregate monotone ecological dynamic $\dot{x}_i = x_i g_i(x)$, $D_{KL}(\ess{x} || x)$ is a Lyapunov function for the dynamic if $\ess{x}$ is an interior ESS.
\end{theorem}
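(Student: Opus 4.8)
The plan is to follow the computation in the proof of Theorem~\ref{ess_Lyapunov} verbatim up to the point where the replicator structure is used, and then replace the replicator-specific step by a single appeal to aggregate monotonicity. First I would set $V(x) = D_{KL}(\ess{x} || x)$ and differentiate along the given dynamic. Since $\dot{x}_i = x_i g_i(x)$, the factor $x_i$ cancels exactly as before, yielding
\[ \dot{V}(x) = -\sum_{i}{\ess{x}_i \frac{\dot{x}_i}{x_i}} = -\sum_{i}{\ess{x}_i g_i(x)} = -\ess{x} \cdot g(x). \]
This is the analogue of the first line of the earlier proof, now with $g$ in place of $f - \bar{f}$.

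Next I would invoke forward-invariance on the simplex. For $\dot{x}_i = x_i g_i(x)$ to preserve the constraint $|x| = 1$ we need $\sum_{i}{\dot{x}_i} = x \cdot g(x) = 0$; this plays the role that $\bar{f} = x \cdot f(x)$ played in cancelling the mean-fitness term in the replicator computation. Subtracting this vanishing quantity lets me write $\dot{V}(x) = -(\ess{x} - x) \cdot g(x)$, so that proving $V$ decreases reduces to establishing $(\ess{x} - x) \cdot g(x) > 0$ for $x$ in a deleted neighborhood of $\ess{x}$.

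The crux is transferring the ESS inequality from $f$ to $g$. By definition, an interior ESS $\ess{x}$ satisfies $\ess{x} \cdot f(x) > x \cdot f(x)$, i.e. $(\ess{x} - x) \cdot f(x) > 0$, for all $x$ near $\ess{x}$ with $x \neq \ess{x}$. Applying the aggregate monotonicity hypothesis with $y = \ess{x}$ and $z = x$ gives precisely $(\ess{x} - x) \cdot f(x) > 0 \iff (\ess{x} - x) \cdot g(x) > 0$, so the inequality for $g$ follows immediately and hence $\dot{V}(x) < 0$. Finally, as in the previous theorem, Jensen's inequality shows that $D_{KL}(\ess{x} || x)$ attains a strict local minimum at $x = \ess{x}$, completing verification of the Lyapunov conditions.

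I expect the only genuinely delicate point to be justifying $x \cdot g(x) = 0$ — that is, making the forward-invariance assumption explicit and confirming it is the correct substitute for the mean-fitness cancellation. Everything else is a direct rewriting of the earlier argument together with one application of the aggregate monotonicity definition, which was formulated exactly for this kind of substitution. Because the statement is only the one-directional implication (ESS $\Rightarrow$ Lyapunov), no converse is needed, which conveniently sidesteps the subtler question of whether aggregate monotonicity transmits the strictness required for the reverse direction.
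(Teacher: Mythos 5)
Your proposal is correct and follows essentially the same route as the paper's own proof: differentiate $V(x) = D_{KL}(\ess{x}\,||\,x)$, use $x\cdot g(x)=0$ (forward-invariance on the simplex) to rewrite $\dot V = -(\ess{x}\cdot g(x) - x\cdot g(x))$, and then apply aggregate monotonicity with $y=\ess{x}$, $z=x$ to transfer the ESS inequality from $f$ to $g$. The only difference is that you spell out the choice of $y$ and $z$ and the Jensen step explicitly, which the paper leaves implicit.
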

\begin{proof}
Let $V(x) = D_{KL}(\ess{x} || x) = \sum_{i}{\ess{x}_i \log{\ess{x}_i}} - \sum_{i}{\ess{x}_i \log{x_i}}$.
Note that since $\dot{x}_i = x_i g_i(x)$ is a dynamic on the simplex, $0 = \sum_{i}{x_i g_i(x)} = x \cdot g(x)$.
Then we have that
\begin{align*} \dot{V}(x) &= -\sum_{i}{ \ess{x}_i \frac{\dot{x}_i}{x_i}} = -\sum_{i}{ \ess{x}_i g_i(x) }\\
&= -\ess{x} \cdot g(x) = -(\ess{x} \cdot g(x) - x \cdot g(x))
\end{align*}
Applying aggregate monotonicity to the last equality completes the proof.
\end{proof}

Since a change of velocity does not alter the orbits of the dynamic, Theorem \ref{ess_Lyapunov_aggregate} shows that the replicator equation is essentially the only aggregate monotone ecological dynamic in which evolutionary stability corresponds to minimizing the Kullback-Liebler divergence. For exactly which class of evolutionary dynamics this property holds for is an open question. From the proof it is clear that the assumption of aggregate monotonicity is too strong for a full characterization since it is only needed that $\ess{x} \cdot g(x) - x \cdot g(x) > 0$ if $\ess{x} \cdot f(x) - x \cdot f(x) > 0$, which quantifies over two distributions rather than three.

\subsection{Exponential Families as Solutions of the Replicator Equation}
The exponential map on the Shahshahani manifold is
\[ exp(x,v) = \sum_{i}{ \frac{x_i e^{v_i}}{ \sum_{j}{x_j e^{v_j}} } \hat{e_i} }, \]
where $\hat{e_i}$ is the $i$-th coordinate vector \cite{Nihat05}. The exponential map reduces to the exponential family at the barycenter $b = (\frac{1}{n}, \ldots, \frac{1}{n})$,
\[ exp(b,v) = \sum_{i}{ \frac{\frac{1}{n}e^{v_i}}{ \sum_{j}{\frac{1}{n} e^{v_j}} } \hat{e_i} } = \frac{1}{\sum_{j}{ e^{v_j}} }(e^{v_1}, \ldots, e^{v_n}).\]


The solutions of the replicator equation can be realized as exponential families \cite{Karev09, Nihat05, Akin82}. Let $x_i = \exp (v_i - G)$ with $\dot{v_i} = f_i(x)$ and $G(x)$ a normalization constant to ensure that the distribution sums to one. From the fact that $\sum_{i}{x_i} = 1$, $0 = \sum_{i}{\dot{x_i}}$ and so
\begin{align*}
0 = \sum_{i}{\dot{x}_i} &= \sum_{i}{\exp (v_i(x) - G(x)) (\dot{v}_i(x) - \dot{G}(x))}\\
&= \sum_{i}{x_i (\dot{v}_i(x) - \dot{G}(x))} = \sum_{i}{(x_i f_i(x))} - \dot{G}(x)\\
&= \bar{f}(x) - \dot{G}(x)
\end{align*}
Hence $\dot{G} = \bar{f}(x)$. Now $x_i$ satisfies
\[ \dot{x_i} = \exp (v_i(x) - G(x)) (\dot{v_i}(x) - \dot{G}(x)) = x_i (f_i(x) - \bar{f}(x)), \]
which is the replicator equation. In the case of a log-linear fitness landscape, explicit solutions can be derived \cite{Nihat05}. In this case, the equation for the variable $v$ can be reduced to a linear differential equation, which can be solved with eigenvalue methods.



\subsection{Denormalization}

Information geometry defines the \emph{denormalized manifold} $\tilde{P}(X) = \{ \tau p | \tau \in \R^+, p \in P(X)\}$, which can be thought of as non-normalized discrete probability distributions. As with $P(X)$, $\tilde{P}(X)$ has an information metric. The denormalized manifold embeds into the reals as $\R^{n}_{+}$, with the denormalized information metric induced by the the metric given by Shahshahani, where the mapping back onto $P(X)$ realizes $\tau$ as $|x|$. In coordinates, the metric is given by \[\tilde{g}_{ij}(x) = \tau g_{ij}(x) = \frac{\tau}{x_i}\delta_{ij},\]
%
%

Akin uses the metric \[ g_{ij}(x) = \frac{1}{x_i}\delta_{ij} \] on $\R^n_{+}$ rather than the metric \[ g_{ij}(x) = \frac{|x|}{x_i}\delta_{ij} \]
given by Shahshahani\cite{Shahshahani79, Akin82}. Both metrics restrict to the same metric given by Shahshahani on the simplex. From the point of view of information geometry, the metric given by Shahshahani is the natural choice. The choice affects the form of the gradient on $\R^n_{+}$, which is in the case of Akin's metric is the Lotka-Volterra predator-prey equation.

\subsection{The Lotka-Volterra Equations and the Replicator Equation}

The Lotka-Volterra equations
\begin{equation}\label{lotka_volterra}
\dot{x}_i = x_i f_i(x)
\end{equation}
descend from $\R^{n}_{+}$, through a normalization map onto the simplex, to a replicator equation with an altered landscape. To see this, let $|x| = x_1 + \cdots + x_n$, $\dot{x}_i = x_i f_i(x)$ and $y_i = \frac{x_i}{|x|}$. Rearrange to $|x|y_i = x_i$ and note that $\frac{d}{dt}{|x|} = \sum_{i}{\dot{x}_i} = \sum_{i}{x_i f_i(x)} = x \cdot f(x)$. By the product rule, $\frac{d}{dt}{|x|}y_i + |x|\dot{y}_i = \dot{x}_i$ and so
\begin{align*}
\frac{d}{dt}{y_i} &= \frac{\dot{x}_i - \frac{d}{dt}{|x|}\dot{y}_i }{|x|}\\
&= \frac{x_i f_i(x) - x \cdot f(x) y_i}{|x|}\\
&= y_i ( f_i(x) - y \cdot f(x) )\\
&= y_i ( g_i(y) - y \cdot g(y) ),
\end{align*}
where $g_i(y) = f_i(x)$ is an alteration of the fitness landscape.

The Lotka-Volterra equations are the gradient flow with respect to the metric given by Akin on $\R^n_{+}$\cite{Hofbauer98}. The gradient of the metric given by Shahshahani differs by a factor of $|x|$:
\begin{equation}\label{shifted_lotka_volterra}
\dot{x}_i = \frac{x_i}{|x|} f_i(x),
\end{equation}

This system is transformable to Equation \ref{lotka_volterra} after a change of velocity eliminating the scalar function $B(x) = \frac{1}{|x|}$ because $B$ is strictly positive on $\R^{n}_{+}$. Equation \ref{shifted_lotka_volterra} transforms to a replicator equation via the normalization map \cite{Shahshahani79}.


The Lotka-Volterra equations can be interpreted as a gradient of the denormalized Fisher information metric, in the case that $f$ is an Euclidean metric, in analogy to the replicator equation. This allows denormalized analogues of earlier results, such as the following, which is true for the denormalized version of the Lotka-Volterra equation.

\begin{theorem}\label{lv_lyapunov}
Let $\ess{x}$ in $\R^n_{+}$ be such that \[\frac{\ess{x} \cdot f(x)}{|\ess{x}|} > \frac{x \cdot f(x)}{|x|}\]
in some neighborhood of $\ess{x}$ (a denormalized ESS). Suppose that the trajectory of $\dot{x}_i = \frac{x_i}{|x|} f_i(x)$ lies in a set that contains no point parallel to $\ess{x}$.
Then the denormalized Kullback-Liebler divergence 
\[ D_{KL}\left(\frac{\ess{x}}{|\ess{x}|} || \frac{x}{|x|} \right)\] is a local Lyapunov function for Equation \ref{shifted_lotka_volterra}.
\end{theorem}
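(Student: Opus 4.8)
The plan is to follow the pattern of the proof of Theorem~\ref{ess_Lyapunov}, but now tracking the fact that $|x|$ is \emph{not} constant along Equation~\ref{shifted_lotka_volterra}. Write $\hat{y} = \ess{x}/|\ess{x}|$ and $y = x/|x|$ for the normalized states, both lying in the simplex, and set $V(x) = D_{KL}(\hat{y}\,||\,y) = \sum_i \hat{y}_i \log\hat{y}_i - \sum_i \hat{y}_i \log y_i$. The first observation is that $V$ depends on $x$ only through its direction $y$, so $V$ is constant along rays through the origin and its zero set is exactly the ray $\{\tau\ess{x} : \tau > 0\}$ of points parallel to $\ess{x}$. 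I would then verify the two Lyapunov conditions along the indicated trajectory: that $V$ is strictly positive (minimized on the target ray) and that $\dot{V} < 0$.

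For the derivative, I would first record that along Equation~\ref{shifted_lotka_volterra}, $\frac{d}{dt}|x| = \sum_i \dot{x}_i = \frac{x\cdot f(x)}{|x|}$, the analogue of the identity used in the normalization computation preceding this section. Differentiating $y_i = x_i/|x|$ by the quotient rule and substituting $\dot{x}_i = \frac{x_i}{|x|}f_i(x)$ yields $\dot{y}_i = \frac{y_i}{|x|}\left(f_i(x) - \frac{x\cdot f(x)}{|x|}\right)$, hence $\dot{y}_i/y_i = \frac{1}{|x|}\left(f_i(x) - \frac{x\cdot f(x)}{|x|}\right)$. Since $\hat{y}$ is fixed, $\dot{V} = -\sum_i \hat{y}_i\, \dot{y}_i / y_i$, and using $\sum_i \hat{y}_i = 1$ together with $\hat{y}\cdot f(x) = \ess{x}\cdot f(x)/|\ess{x}|$ this collapses to
\[ \dot{V}(x) = -\frac{1}{|x|}\left(\frac{\ess{x}\cdot f(x)}{|\ess{x}|} - \frac{x\cdot f(x)}{|x|}\right). \]

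With this formula the conclusion is nearly immediate. Because $|x| > 0$ on $\R^n_{+}$, the sign of $\dot{V}$ is governed by the bracketed term, which is strictly positive precisely by the denormalized ESS hypothesis $\frac{\ess{x}\cdot f(x)}{|\ess{x}|} > \frac{x\cdot f(x)}{|x|}$; hence $\dot{V} < 0$ throughout the neighborhood. Strict positivity of $V$ away from the target follows from Jensen's inequality exactly as in Theorem~\ref{ess_Lyapunov}: $D_{KL}(\hat{y}\,||\,y) \geq 0$ with equality iff $y = \hat{y}$.

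The remaining hypothesis is where I expect the only genuine subtlety to lie. Since $V$ is degenerate along the entire ray parallel to $\ess{x}$ rather than vanishing at an isolated point, both $\dot{V} < 0$ and $V > 0$ fail exactly on that ray (there the bracketed term equals $(\hat{y}-y)\cdot f(x) = 0$ and $V = 0$, which also shows the denormalized ESS condition forces the strict inequality to be stated away from the parallel ray). The assumption that the trajectory lies in a set containing no point parallel to $\ess{x}$ removes precisely these degenerate points, so that along the trajectory $V$ stays strictly positive and strictly decreasing, establishing that it is a local Lyapunov function. The main care needed is therefore not in the computation but in correctly matching the ray-valued minimum set of the direction-only function $V$ to the definition of a Lyapunov function, and in confirming that the denormalized ESS condition is compatible with excluding that ray.
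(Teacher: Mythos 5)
Your proposal is correct and follows essentially the same route as the paper: you arrive at the identical formula $\dot{V} = -\tfrac{1}{|x|}\bigl(\tfrac{\ess{x}\cdot f(x)}{|\ess{x}|} - \tfrac{x\cdot f(x)}{|x|}\bigr)$, merely computing it via the induced dynamics of the normalized variable $y = x/|x|$ rather than by differentiating $\sum_i \tfrac{\ess{x}_i}{|\ess{x}|}(\log x_i - \log|x|)$ directly as the paper does. Your added discussion of the degenerate ray $\{\tau\ess{x}\}$ correctly elaborates the paper's one-line justification for the no-parallel-point hypothesis.
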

\begin{proof}
The divergence is minimal (and equal to zero) when $x = c \ess{x}$ for some constant $c$. Hence if the line through the origin and the point $\ess{x}$ intersects the the trajectory at most once, the divergence is zero if and only if $\ess{x} = x$. The time derivative is
\begin{align*}
\frac{d}{dt}\left[D_{KL}\left(\frac{\ess{x}}{|\ess{x}|} || \frac{x}{|x|}\right)\right] &= 0 -\frac{d}{dt}\left[ \sum_{i}{\frac{\ess{x}_i}{|\ess{x}|} (\log{x_i} - \log{|x|})} \right]\\
&= -\sum_{i}{\frac{\ess{x}_i}{|\ess{x}|} \frac{\dot{x}_i}{x_i}} + \sum_{i}{\frac{\ess{x}_i}{|\ess{x}|} \frac{\dot{|x|}}{|x|}} \\
&= -\sum_{i}{\frac{\ess{x}_i}{|\ess{x}|} \frac{1}{x_i} \frac{x_i}{|x|} f_i(x) } + \frac{x \cdot f(x) }{|x|^2} \\
&= -\frac{1}{|x|} \sum_{i}{\frac{\ess{x}_i}{|\ess{x}|} f_i(x)}  + \frac{1}{|x|^2} x \cdot f(x)\\
&= -\frac{1}{|x|}\left( \frac{\ess{x} \cdot f(x)}{|\ess{x}|} - \frac{x \cdot f(x)}{|x|} \right) < 0.
\end{align*}
\end{proof}

\section{Informatics of Multiple Population Replicator Dynamics}

The information-theoretic approach easily extends to multiple population replicator equations such as bimatrix games. As before, the potential information plays a crucial role. It is the sum of the potential informations of all populations that plays the role of the Lyapunov function and gives rise to the geometry. It suffices to discuss the two population case as it is clear that the results extend inductively to finitely-many populations.

\subsection{Two Populations}
Consider two categorical distributions $p = (p_1, \ldots, p_n)$ and $q = (q_{n+1}, \ldots, q_{n+m})$ with fitness landscapes $f(p,q) = (f_1(p, q), \ldots, f_n(p, q))$ and $g(p, q) = (g_{n+1}(p,q), \ldots, g_{n+m}(p,q))$. Define the coupled replicator system

\begin{align*}
\dot{p_i} &= p_i(f_i(p,q) -\ex{p}{f(p,q)}) \\
\dot{q_j} &= q_j(g_j(p,q) -\ex{q}{g(p,q)})
\end{align*}
where $i$ runs from 1 to $n$ and $j$ runs from $n+1$ to $n+m$. Note carefully that the expected values are taken with each distribution respectively.

This system is the gradient flow of the Riemannian metric defined on the interior of $\Delta^n \times \Delta^{m}$ given by
\[G_{i,j}(p, q) = \begin{cases}
                  \frac{1}{p_i} & \text{if $i = j \leq n$ } \\
                  \frac{1}{q_i} & \text{if $i = j > n$ } \\
		  0 & \text{else}
                 \end{cases}\]
That is, the matrix is the direct sum matrix of the usual metric for each equation. As in the single population case, we can use potential information to form a Lyapunov function for the system. Given states $\hat{p}$ and $\hat{q}$, let $L$ be the sum of the potential information of each categorical distribution. That is,
\begin{align*}
L &= D_{p}(\hat{p}, p) + D_{q}(\hat{q}, q)\\
&= \sum_{i}{\hat{p}_i \log{\hat{p}_i}} - \sum_{i}{\hat{p}_i \log{p_i}} + \sum_{j}{\hat{q}_j \log{\hat{q}_j}} - \sum_{j}{\hat{q}_j \log{q_j}}
\end{align*}

The metric can be obtained as the localization of the sum of the divergence functions. All the usual calculations follow from the fact that the system is a gradient, e.g. Fisher's Fundamental Theorem.

\subsection{Potential Information is a Lyapunov Function}

Recall that $\hat{p}$ is an ESS in the single population case if $\hat{p} \cdot f(p) > p \cdot f(p)$ for all $p$ in a neighborhood of $\hat{p}$.

\begin{theorem}
If $\hat{p}$ and $\hat{q}$ are ESS for each system respectively then $L$ is a Lyapunov function for the coupled system.
\end{theorem}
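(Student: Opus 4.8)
The plan is to mimic the proof of Theorem \ref{ess_Lyapunov} almost verbatim, exploiting the fact that $L$ is an additive sum of two single-population potential informations and that the metric is a direct sum. First I would verify that $L$ is a legitimate candidate Lyapunov function: since $L = D_{p}(\hat{p},p) + D_{q}(\hat{q},q)$ is a sum of two Kullback-Liebler divergences, each term is non-negative and vanishes precisely when its two arguments agree (Jensen's inequality, exactly as in Theorem \ref{ess_Lyapunov}). Hence $L \geq 0$ with equality if and only if $(p,q) = (\hat{p},\hat{q})$, so $L$ attains a strict local minimum at the candidate equilibrium.

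The core computation is the time derivative. Differentiating term by term gives
\[ \dot{L} = -\sum_{i}{\hat{p}_i \frac{\dot{p}_i}{p_i}} - \sum_{j}{\hat{q}_j \frac{\dot{q}_j}{q_j}}. \]
The crucial structural observation is that each summand of $L$ depends on the dynamics of only one population, so $\dot{L}$ splits cleanly into a $p$-part and a $q$-part with no cross terms; the coupling between populations enters only through the fact that $f$ and $g$ are evaluated at the common state $(p,q)$, but at any fixed instant these are simply fixed vectors to which the single-population computation applies. Substituting the coupled replicator equations and using $\sum_{i}{\hat{p}_i} = 1$ together with $\sum_{j}{\hat{q}_j} = 1$ reduces each part to the form encountered in the single-population case, yielding
\[ \dot{L} = -\left(\hat{p}\cdot f(p,q) - p\cdot f(p,q)\right) - \left(\hat{q}\cdot g(p,q) - q\cdot g(p,q)\right). \]

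It then remains to apply the evolutionary stability hypotheses to each bracket separately. Because $\hat{p}$ is an ESS for its system, $\hat{p}\cdot f(p,q) > p\cdot f(p,q)$ throughout a neighborhood of $(\hat{p},\hat{q})$, so the first bracket is positive; the analogous inequality for $\hat{q}$ and $g$ makes the second bracket positive. Each negated bracket is therefore strictly negative, and their sum gives $\dot{L} < 0$ on a punctured neighborhood, establishing that $L$ is a local Lyapunov function. I anticipate the only real subtlety, rather than a genuine obstacle, is the bookkeeping around the cross-dependence: one must take care that the ESS inequalities are stated with the landscapes evaluated at the current joint state $(p,q)$, and that the neighborhood on which both inequalities hold simultaneously is a neighborhood in the product $\Delta^n \times \Delta^m$ rather than in either factor alone. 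Once the derivative is shown to decouple additively, the two single-population arguments of Theorem \ref{ess_Lyapunov} carry over unchanged.
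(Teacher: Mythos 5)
Your proposal is correct and follows the same route as the paper: differentiate $L$ termwise to get $\dot{L} = -(\hat{p}\cdot f - p\cdot f) - (\hat{q}\cdot g - q\cdot g)$, apply the two ESS inequalities to each bracket, and invoke Jensen's inequality for the minimum at $(\hat{p},\hat{q})$. The paper compresses the computation into one line, but the decomposition and the use of the hypotheses are identical; your remark about the neighborhood living in the product $\Delta^n \times \Delta^m$ is a worthwhile clarification the paper leaves implicit.
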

\begin{proof}
A straight-forward computation shows that (up to a negative)
\[ \dot{L} = \hat{p} \cdot f(p, q) - p \cdot f(p, q) + \hat{q} \cdot g(p, q) - q \cdot g(p, q).\]
$L$ is positive everywhere and has minimum at $(\hat{p}, \hat{q})$. Since $\hat{p}$ and $\hat{q}$ are ESS, $\dot{L} < 0$, so $L$ is a local Lyapunov function.
\end{proof}

Notice that the hypothesis that both $\hat{p}$ and $\hat{q}$ are ESS is too strong. Indeed, all that is required is that
\[ \hat{p} \cdot f(p,q) + \hat{q} \cdot g(p,q) > p \cdot f(p, q) + q \cdot g(p,q). \]
Call this condition a \emph{coupled ESS} (as well as its obvious higher dimensional analogs) and note that any ESS is a coupled ESS. Then $L$ is a Lyapunov for the system if and only if ($\hat{p}$ and $\hat{q}$) is a coupled ESS for the two population system.

%

\subsection{Solutions}
We can again show that the solutions are exponential families. Let $\dot{v} = f(p, q)$ and $\dot{w} = g(p, q)$. Let $\dot{N} = \ex{p}{f(p,q)}$ and $\dot{M} = \ex{q}{g(p,q)}$. Then $p_i = \exp{(v_i - N)}$ and $q_j = \exp{(w_j - M)}$ is a solution to the coupled system. Indeed,
\[ \dot{p}_i = \exp{(v_i - N)} (\dot{v}_i - \dot{N}) = p_i( f_i(p, q) - \ex{p}{f(p, q)}),\]
and similarly for $q_j$.

\subsection{Multiple Populations}

The above generalizes by induction to show that for a coupled system of multiple interacting populations, the sum of the respective potential informations gives a Lyapunov function for a coupled ESS.

\section{Discussion}

The Shahshahani geometry can be interpreted within the framework of information theory as the information geometry of the simplex. This explains the origin of several quantities in evolutionary game theory including the Shahshahani metric and the Kullback-Liebler information divergence. An important feature of the approach is that the information-geometric reasoning extends to the Lotka-Volterra equation and the multiple population replicator equation easily within the framework. Additionally, the replicator dynamic arises intuitively from purely mathematical and statistical concepts such as Fisher information. This shows that the replicator equation models the information dynamics of natural selection.

\bibliography{ref}
\bibliographystyle{plain}

\end{document}